\newcommand{\ouset}[3]{\underset{#1}{\overset{#3}{#2}}}
\theoremstyle{plain}
\newtheorem{theorem}{Theorem}[section]
\theoremstyle{definition}
\theoremstyle{remark}
\begin{document}


\title{A more efficient algorithm to compute the Rand Index for change-point problems}

\author{
\name{Lucas de Oliveira Prates\thanks{CONTACT Prates.~L. Email: lucasdelprates at gmail dot com}}
}

\maketitle

\begin{abstract}
    We provide a more efficient algorithm for computing the Rand Index when the data clusters comes from a change-point detection problem. Given $N$ data points and two clusterings of size $r$ and $s$, the algorithm runs on $O(r+s)$ time complexity and $O(1)$ memory complexity. The traditional algorithm, in contrast, has a time complexity of $O(rs+N)$ and memory complexity of $O(rs)$.
\end{abstract}

\begin{keywords}
Rand Index; Change-point detection; Segmentation; Clustering
\end{keywords}

\section{Introduction}

The Rand Index is a classical evaluation metric in Statistics and Machine Learning. Originally proposed for clustering problems (\cite{Rand1971}), it has found applications or adaptations to other tasks such as computer vision(\cite{Unnikrishnan2005AMF}, \cite{Unnikrishnan_2007}), coreference resolution (\cite{recasens_hovy_2011}), and change-point detection (\cite{Truong2020}). When applied to the latter, the additional structure of the problem studied imposes that the clusters detected must be contiguous. For this scenario, we prove that the Rand Index can be computed in a more efficient manner.

\section{Traditional Rand Index algorithm}

Given two integers $r < s$, we will use the notation $r:s$ for the set $\{r, r+1, \ldots, s\}$. Let $\mathcal{Z}$ be a non-empty set, $N\,\in\, \mathbb{N}$ be the sample size, and $\mathbf{z} = (z_i)_{i=1}^N$ be data samples such that $z_i \, \in \, \mathcal{Z}$. A clustering is defined as a partition $C = \{C_1,\ldots, C_r\}$ of $1:N$, and each partition set is a cluster. This partition is usually learned by applying a statistical or machine learning algorithm in $\mathbf{z}$ to group together data points that share similarities.

Denote by $I_{C}(i, j)$ the function that indicates if the samples $z_i$ and $z_j$ are in the same cluster of $C$. Given two clusterings $C_1$ and $C_2$, define

\begin{align*}
&N_{11} = |\{(i,j) \, \in \, (1:N)^2 \, \,| \,\, i < j \mbox{ and } I_{C_1}(i,j) = 1 = I_{C_2}(i,j)\}| \quad,\\
&N_{00} = |\{(i,j) \, \in \, (1:N)^2 \, \,| \,\, i < j \mbox{ and } I_{C_1}(i,j) = 0 = I_{C_2}(i,j)\}| \quad.
\end{align*}

The Rand Index is then defined as

\begin{equation}
RI = \frac{N_{00}+N_{11}}{\binom{N}{2}} \quad.
\end{equation}

The term $N_{11}$ measures how many pairs of indices both clusterings grouped together and the term $N_{00}$ how many pairs are placed in different sets by both clusterings. Therefore, $N_{00}+N_{11}$ measures the total number of agreements between clusterings. Finally, we scale by $\binom{N}{2}$, the total number of pairs. The metric ranges on $[0, 1]$, attaining $1$ if, and only if, the clusterings are identical, and $0$ if they are completely dissimilar.

Write $C_{1} = \{C_{11},\ldots, C_{1r}\}$ and $C_{2} = \{C_{21},\ldots, C_{2s}\}$. The traditional algorithm iterates through the partitions to build a $r\times s$ contingency table whose elements are \\ $n_{ij} = |C_{1i}\cap C_{2j}|$, and then computes the Rand Index by the equation

\begin{equation}
RI = 1 - \frac{\left[\frac{1}{2}\left(\sum_{i=1}^r(\sum_{j=1}^s n_{ij})^2 + \sum_{j=1}^s(\sum_{i=1}^r n_{ij})^2\right)  - \sum_{i=1}^r\sum_{j=1}^s(n_{ij})^2\right]}{\binom{N}{2}} \quad.
\label{eq: RI_alternative_eq}
\end{equation}

Therefore, the time complexity of the algorithm is $O(rs + N)$, and its memory complexity is $O(rs)$ since it needs to store the contingency table.

\section{Efficient Rand Index algorithm for CPD}

Change-point detection is a multidisciplinary field of statistics that provides reliable methodologies for the detection of abrupt changes in time-series. Although its methods are not the focus of this work, we describe a simplified offline formulation of the problem. Consider a sequence $\{Z_{i}\}_{i=1}^N$ of independent random variables where $Z_{i}$ has a cumulative distribution function $F_i$ for all $i \, \in \, 1:N$. Let $C^*$ be the set where a distribution change occurs, that is
    
$$ C^* = \{c^* \in 1:(N-1) | F_{c^*} \neq F_{c^*+1}  \}\quad.$$

$C^*$ is the true change-point set, and its elements are called change-points. Whenever a change-point occurs, the distribution of the data changes, capturing the idea of abrupt change in the process behavior. The random variables between two consecutive change-points have the same distribution so that they can be seen as belonging to the same cluster.

The goal then is to estimate $C^*$ and the related CDFs of each segment. The change-point method outputs a change-point set $C$ that best splits the data in contiguous segments according to some statistical criteria or loss function. It is then usual to study the performance of the methods by comparing the outputted change-point sets between themselves and against the ground truth with respect to some metric, in our case the Rand Index. For an introduction to change-point detection, see \cite{niu2016} and \cite{Truong2020}.

\subsection{Rand Index CPD equation}

Given $C = \{c_1,\ldots, c_r\}$, the sorted change-point set detected, there is a natural identification to a partition of $1:N$. Defining $c_0 = 0$ and $c_{r+1} = N$, the set $C$ can be seen as the clustering

\begin{equation}
\{\{(c_{i}+1):c_{i+1}\}_{i=0}^k\} \quad.
\label{eq:cp_cluster_id}
\end{equation}

A set with $r$ change-points has $r+1$ contiguous clusters, each ending at a change-point. To exemplify, assume that $N = 10$ and $C = \{3, 8\}$. The equivalent clustering is $\{\{1,2,3\}, \{4,5,6,7,8\}, \{9, 10\}\}$. 

We can compute the Rand Index between two change-point sets by comparing their induced clusterings. Since the clusters are contiguous, the Rand Index admits a simplified expression that depends solely on the change-points.

 \begin{theorem}
        Let $C  = \{c_1,c_2,\ldots, c_r\}$ and $C^* = \{c_1^*,c_2^*,\ldots,c_s^*\}$ be sorted change-point sets. Define $c_0 = c_0^* = 0$ and $c_{r+1} = c_{s+1}^* = N$. Identifying the sets with clusterings as in ~\autoref{eq:cp_cluster_id}, the Rand Index is given by
        
        \begin{equation}
            RI = 1 - \frac{\ouset{i=0}{\sum}{r}\ouset{j=0}{\sum}{s} n_{ij} |c_{i+1} - c_{j+1}^*|}{\binom{N}{2}} \quad, \label{eq: randindex_cpd}
        \end{equation} 
        
        where
        
        $$ n_{ij} =  \max\left(0, \min\left(c_{i+1},c_{j+1}^*\right) - \max\left(c_{i},c_{j}^*\right)\right)\quad.$$
    \end{theorem}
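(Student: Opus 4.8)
The plan is to bypass the contingency-table identity~\ref{eq: RI_alternative_eq} and instead show directly that the proposed numerator counts the pairs on which the two clusterings \emph{disagree}. Writing $C_{1,i}=\{(c_i+1):c_{i+1}\}$ and $C^{*}_{2,j}=\{(c_j^{*}+1):c_{j+1}^{*}\}$ for the clusters obtained from the identification~\ref{eq:cp_cluster_id}, I first note that since these clusters are contiguous integer intervals, their intersection is $C_{1,i}\cap C^{*}_{2,j}=\{(\max(c_i,c_j^{*})+1):\min(c_{i+1},c_{j+1}^{*})\}$, whose cardinality is exactly $n_{ij}=\max(0,\min(c_{i+1},c_{j+1}^{*})-\max(c_i,c_j^{*}))$; so the $n_{ij}$ of the statement really are the contingency-table entries. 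By the definition of the index, $\binom{N}{2}(1-RI)$ equals the number of index pairs $(p,q)$ with $p<q$ that one clustering groups together while the other separates. Hence it suffices to prove
$$ \ouset{i=0}{\sum}{r}\ouset{j=0}{\sum}{s} n_{ij}\,|c_{i+1}-c_{j+1}^{*}| \;=\; \#\{\text{disagreeing pairs}\} \quad. $$

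For a point $x\in 1:N$ let $R(x)$ be the right endpoint of the $C$-cluster containing $x$ and $R^{*}(x)$ the right endpoint of its $C^{*}$-cluster, so that $x\in C_{1,i}\cap C^{*}_{2,j}$ forces $R(x)=c_{i+1}$ and $R^{*}(x)=c_{j+1}^{*}$. Because the nonempty intersections $C_{1,i}\cap C^{*}_{2,j}$ partition $1:N$ and the summand $|c_{i+1}-c_{j+1}^{*}|$ is constant on each such block, regrouping the double sum point by point gives
$$ \ouset{i=0}{\sum}{r}\ouset{j=0}{\sum}{s} n_{ij}\,|c_{i+1}-c_{j+1}^{*}| \;=\; \ouset{x=1}{\sum}{N} |R(x)-R^{*}(x)| \quad. $$
I would then read $|R(x)-R^{*}(x)|$ as the number of integers strictly between $R(x)$ and $R^{*}(x)$, and match each such integer to a disagreeing pair whose smaller element is $x$.

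The crux is this matching. Fixing $x$ with $R(x)>R^{*}(x)$, for each $q$ satisfying $R^{*}(x)<q\le R(x)$ one checks that $x<q$, that $x$ and $q$ share a $C$-cluster (both exceed the left endpoint of $C_{1,i}$ and are $\le R(x)$), and that they fall in different $C^{*}$-clusters (as $q>R^{*}(x)$); thus $(x,q)$ is a "same $C$, different $C^{*}$" disagreement with left element $x$, and there are exactly $R(x)-R^{*}(x)$ of them. The symmetric case $R(x)<R^{*}(x)$ yields precisely $R^{*}(x)-R(x)$ disagreements of the opposite type with left element $x$, and $R(x)=R^{*}(x)$ yields none. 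Conversely, any disagreeing pair $(p,q)$ with $p<q$ forces either $R^{*}(p)<q\le R(p)$ or $R(p)<q\le R^{*}(p)$, so it is counted exactly once, at $x=p$. Summing over $x$ then identifies $\sum_{x}|R(x)-R^{*}(x)|$ with the total number of disagreeing pairs, which is what is required. The main obstacle I anticipate is making this case analysis airtight at the block boundaries, so that every disagreeing pair is counted once and only once; a purely algebraic route expanding the row and column squared-sums in~\ref{eq: RI_alternative_eq} is available as a fallback, but it is less transparent than the per-point bijection above.
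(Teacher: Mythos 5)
Your proof is correct and takes essentially the same route as the paper's: both partition the pairs by their smaller element $x$, identify the number of disagreements anchored at $x$ with the gap $|R(x)-R^{*}(x)|$ between the right endpoints of the two clusters containing $x$ (the paper's $|c_{i(x)}-c^{*}_{j(x)}|$), and regroup that per-point sum over the interval blocks $C_{1,i}\cap C^{*}_{2,j}$ of cardinality $n_{ij}$. The only cosmetic difference is that you count disagreements directly (and verify the converse matching explicitly), whereas the paper counts agreements $|A_x|=(N-x)-|c_{i(x)}-c^{*}_{j(x)}|$ and then complements.
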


\begin{proof}

For each element $x$ in $1:(N-1)$ define $A_x$ as the set of pairs $(x,y)$ where $x < y$ and in which the clusterings agree. Since these sets are disjoint and contain all and only the pairs that are in agreement, we have that

$$ N_{00} + N_{11} = \sum_{x=1}^{N-1} |A_x| \quad.$$

The restriction $x < y$ avoids double counting.

First, we know that there are a total of $N-x$ pairs $(x,y)$ of the form $x < y$. Let $\phi(x)$ and $\psi(x)$ be the indices of the smallest change-points in $C$ and $C^*$ that are greater or equal to $x$, respectively. Hence, $x\,\in\,(c_{\phi(x)-1}+1):c_{\phi(x)}$ and $x\,\in\,(c_{\psi(x)-1}^*+1):c_{\psi(x)}^*$. The clusterings agree on all pairs $(x, y)$ where $y \, \in \, (x+1):\min(c_{\phi(x)},c_{\psi(x)}^*)$ since they place $x$ and $y$ in a single set. Additionally, they agree on the pairs $(x, w)$ for $w\,\in\, \left(\max\left(c_{\phi(x)},c_{\psi(x)}^*\right)+1\right):N$ since they place $x$ and $w$ on different sets.

\begin{figure}
    \centering
    \includegraphics[width=1\linewidth]{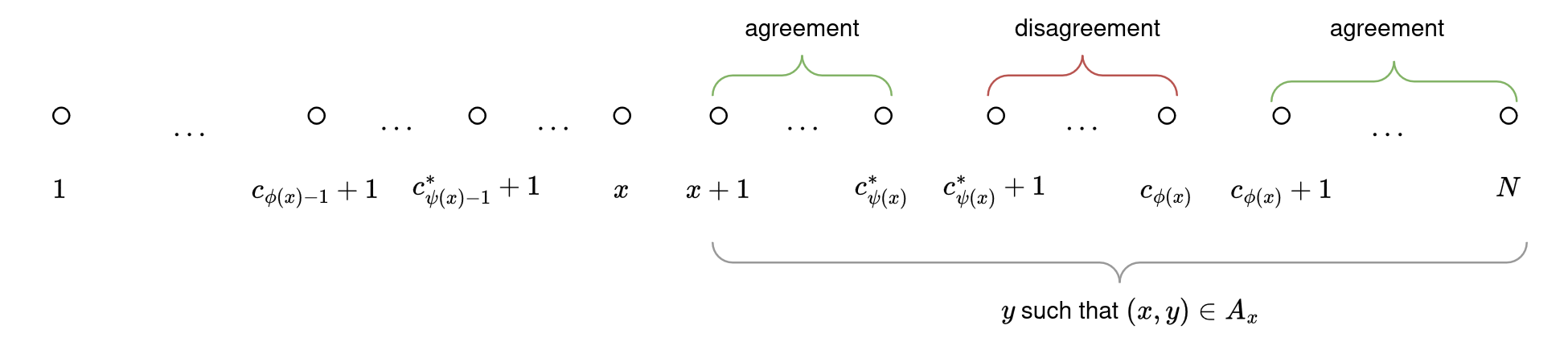}
    \caption{Sketch of agreements and disagreements in $A_x$.}
\end{figure}

From this reasoning, there are a total of $|c_{\phi(x)}-c_{\psi(x)}^*|$ disagreements, so that

$$ |A_x| = (N-x) - |c_{\phi(x)}-c_{\psi(x)}^*| \quad.$$

Substituting back in the original equation

$$RI = \frac{\sum_{x=1}^{N-1} |A_x|}{\binom{N}{2}} = 1 - \frac{\sum_{x=1}^{N-1} |c_{\phi(x)}-c_{\psi(x)}^*|}{\binom{N}{2}}\quad.$$

Let $I_{ij} = ((c_{i}+1):c_{i+1}) \cap ((c^*_{j}+1):c^*_{j+1})$. It is easy to show that $(I_{ij})_{i=0,j=0}^{r,s}$ forms a partition for $1:N$. For every element $x$ of $I_{ij}$, we have $\phi(x) = i+1$ and $\psi(x) = j+1$, so that

\begin{align*}
\sum_{x=1}^{N-1} |c_{\phi(x)}-c_{\psi(x)}^*| 
&= \sum_{i=0}^{r}\sum_{j=0}^{s}\sum_{x\,\in\,I_{ij}} |c_{\phi(x)}-c_{\psi(x)}^*|\\
&= \sum_{i=0}^{r}\sum_{j=0}^{s}\sum_{x\,\in\,I_{ij}} |c_{i+1}-c_{j+1}^*|\\
&= \sum_{i=0}^{r}\sum_{j=0}^{s}n_{ij}|c_{i+1} - c_{j+1}^*| \quad,
\end{align*}

where $n_{ij} = |I_{ij}|$. It is simple to show that

$$ n_{ij} =   \max\left(0, \min\left(c_{i+1},c_{j+1}^*\right) - \max\left(c_{i},c_{j}^*\right)\right)\quad,$$

which completes the proof.

\end{proof}

\subsection{Rand Index CPD algorithm}

Albeit the summation in ~\autoref{eq: randindex_cpd} has $rs$ terms, there are at most $r+s$ non-empty $I_{ij}$ intervals, hence at most $r+s$ terms for which $n_{ij} \neq 0$. Indeed, for each interval $((c_{i}+1):c_{i+1})_{i=0}^r$, let $a(i)$ be the number of intervals of $C^*$ that $(c_{i}+1):c_{i+1}$ intersects and $J(i)$ be the highest interval index of $C^*$ that it intersects. Since the $i$-th interval cannot intersect the intervals below $J(i-1)$, we have $a(i) \leq J(i) - (J(i-1) - 1)$. The total number of non-empty intervals is just the sum of $a(i)$, so \\ $\sum_{i=1}^r a(i) \leq \sum_{i=1}^r (J(i) - J(i-1) + 1) \leq r + s$, where we used the fact that $J(r) = s$.

We can efficiently filter the empty cases with the following observations. On one hand, if $i$-th index for the first summation and $j$-th index for the second summation satisfy $c_{i+1} < c_{j+1}^*$, then we must have that $n_{ik} = 0$ for all $k \geq j+1$. This happens because the $\min\left(c_{i+1},c_{k+1}^*\right) - \max\left(c_{i},c_{k}^*\right) = c_{i+1} - c_{k}^* < 0$, hence the set $I_{ij}$ is empty. 
    
On the other hand, if $c_{i+1} \geq c_{j+1}^*$, then $n_{kl} = 0$ for all $k \geq i + 1$ and $l \leq j$. Therefore, for all indices above $i$ in the first summation, we can skip all indices below or equal to $j$ in the second summation. 

The pseudocode below provides an implementation taking these observations into account.

\begin{algorithm}[H]
\caption{Compute Rand Index CPD}
\begin{algorithmic}
\Procedure{RICPD}{$C_1, C_2$}
    \State $r \gets \mbox{size}(C_1) - 1$ \Comment{Note $C_1[0] = 0; C_1[r] = N$}
    \State $s \gets \mbox{size}(C_2) - 1$ \Comment{Note $C_2[0] = 0; C_2[s] = N$}
    \State $d \gets 0$ \Comment{Dissimilarity}
    \State $b \gets 0$ \Comment{initial value for $j$ to skip unnecessary iterations}
    \For{$i\,\in\, 0:(r-1)$}
        \For{$j\,\in\, b:(s-1)$}
            \State $m \gets \min(C_1[i+1], C_2[j+1]) - \max(C_1[i], C_2[j])$
            \State $m \gets \max(0, m)$
            \State $d \gets d + m|C_1[i+1]-C_2[j+1]|$
            \If{$C_1[i+1] < C_2[j+1]$}
                \State $\mathbf{break}$
            \Else
                \State $b \gets j+1$
            \EndIf
            
        \EndFor
    \EndFor
    \State $N \gets C_1[r]$
    \State $RI \gets 1 - \frac{d}{\binom{N}{2}}$
    \State \Return RI
\EndProcedure
\end{algorithmic}
\end{algorithm}

Note that the input of the algorithm is the sorted change-points sets (with the ``auxiliary" change-points $0$ and $N$ at both ends) whose total size is $s + r$. In contrast, the traditional algorithm requires the full partitions whose size is at least $2N$. 

It follows directly that the algorithm uses $O(1)$ auxiliary memory. For the time complexity, given $i\,\in\,0:r$, let $J(i) = \max\{j\,\in\,(0:s)\,\,|\,\, c_{j+1}^* \leq c_{i+1}\}$. The $i$-th iteration does not perform any computations on the indices smaller than $J(i-1)$ since we update the starting location of $j$. Moreover, it breaks on $j = J(i)+1$. Since the number of operations per iteration is constant, the time complexity $T(r, s)$ satisfies

\begin{align*}
    T(r, s) 
    &\leq \beta + \sum_{i=1}^r \alpha(J(i) + 1 - J(i-1))\\
    &\leq \beta + r\alpha + \alpha \sum_{i=1}^r (J(i)- J(i-1)) \quad,\\
    &=\beta + \alpha(r+s) \,\in\, O(r+s)\quad,
\end{align*}
    
    since $J(r) = s$.

\section*{Acknowledgement}

The author thanks Florencia Graciela Leonardi, the advisor of his master's thesis, during which he obtained this minor result while studying change-point problems. The author was supported by a CAPES\footnote{Coordination of Superior Level Staff Improvement, Brazil} fellowship during his master's thesis.

\bibliographystyle{apacite}
\bibliography{bibliography}

\end{document}